\newtheorem{theorem}{Theorem}[section]
\newtheorem{conjecture}[theorem]{Conjecture}
\newtheorem{lemma}[theorem]{Lemma}
\newtheorem{claim}[theorem]{Claim}
\newtheorem{definition}[theorem]{Definition}
\numberwithin{equation}{section} \numberwithin{theorem}{section}
\newcommand{\prob}{{\mathbb P}}
\newcommand{\Z}{\mathbb Z}
\title{Metastability thresholds for anisotropic bootstrap percolation in three dimensions}
\author{Aernout van Enter$^*$  \and Anne Fey$^\dagger$}
\begin{document}

\maketitle

\begin{abstract}
In this paper we analyze several anisotropic bootstrap percolation models in three dimensions. We present the order of magnitude for the metastability thresholds for a fairly general class of models. In our proofs, we use an adaptation of the technique of dimensional reduction. We find that the order of the metastability threshold is generally determined by the 'easiest growth direction' in the model. In contrast to anisotropic bootstrap percolation in two dimensions, in three dimensions the order of the metastability threshold for anisotropic bootstrap percolation can be equal to that of isotropic bootstrap percolation. 
\end{abstract}

{\em Key words:} {anisotropic bootstrap percolation, threshold length, dimensional reduction}

\bigskip

{\small \emph{$^*$ Aernout van Enter, Johann Bernoulli Institute for Mathematics and Computer science, Groningen University, The Netherlands, \texttt{aenter@phys.rug.nl}}}

{\small \emph{$^{\dagger}$ Anne Fey, Delft Institute of Applied Mathematics, Delft University of Technology, The Netherlands, \texttt{a.c.fey-denboer@tudelft.nl}}}

\bigskip

\section{Introduction}

In bootstrap percolation models on a finite cube $[0,L]^d \in \Z^d$, in the starting configuration every site is occupied with probability $p$, and empty otherwise, independent of all other sites. 

The configuration then evolves according to the bootstrap rule: each site that 
has at least $k$ occupied sites in its neighborhood becomes occupied, and 
occupied sites remain occupied. The rule will repeatedly be applied until no new 
site will become occupied. One is usually interested in the probability that the cube is internally spanned, that is, in the final configuration all sites are occupied. 
We will in this paper choose $k$ to be half the number of 
sites in the neighborhood.
In ordinary bootstrap percolation  
the neighborhood of a site consists of all its nearest neighbors. In anisotropic
 bootstrap percolation, however, the size or shape of the neighborhood is not 
equal in every direction. For earlier work on bootstrap percolation models, see 
e.g \cite{ADE, AL, Dua, vE, Grifp, Holp, HLR, Mou, schonmann, Sch2}.

Bootstrap percolation models and arguments have been applied in a 
variety of settings, from fluid dynamics, 
magnetic models, the theory of glasses,  
neural networks, the theory of sandpiles to rigidity theory and economics, see 
e.g. \cite{Adl,Ami, CRV, ET, FLP, LV, L, Ton}.

Anisotropic bootstrap percolation models until now have been studied in two 
dimensions. For instance, Gravner and Griffeath \cite{GG} introduced the model where the 
neighborhood consists of six sites, namely, in the $x$ direction only the 
nearest neighbors, but in the $y$ direction both the nearest and the 
next-nearest neighbors. We will call this the $(1,2)$ model. In this model (where $k=3$), an occupied rectangle can grow by the bootstrap rule most 
easily in the $y$ direction. A single occupied site at distance 1 or 2 from 
the square suffices to fill the next line segment in this direction, whereas in the $x$ direction, two occupied sites with no more than three empty sites in between are needed to fill the next line segment. 
The behaviour of this model is similar to that of the semioriented
model studied in \cite{Dua, Mou, Sch2}; for a  bootstrap rule whose 
anisotropy appears to be of qualitatively 
different type, as its asymptotics follows standard isotropic behaviour, see 
\cite{BMM}.

We will study the following kind of anisotropic model (see Figure \ref{neighborhoodfig}): the neighborhood consists of the $a$-nearest neighbors in the $x$ direction, the $b$-nearest neighbors in the $y$ direction and the $c$-nearest neighbors in the $z$ direction, with $a \leq b \leq c$, and we choose $k = a+b+c$. 
We will call this the $(a,b,c)$ model. The notation $(a,b)$ for two-dimensional models, like the $(1,2)$ model of \cite{GG}, is similar.

\begin{figure}%
\centering
\includegraphics[width=6cm]{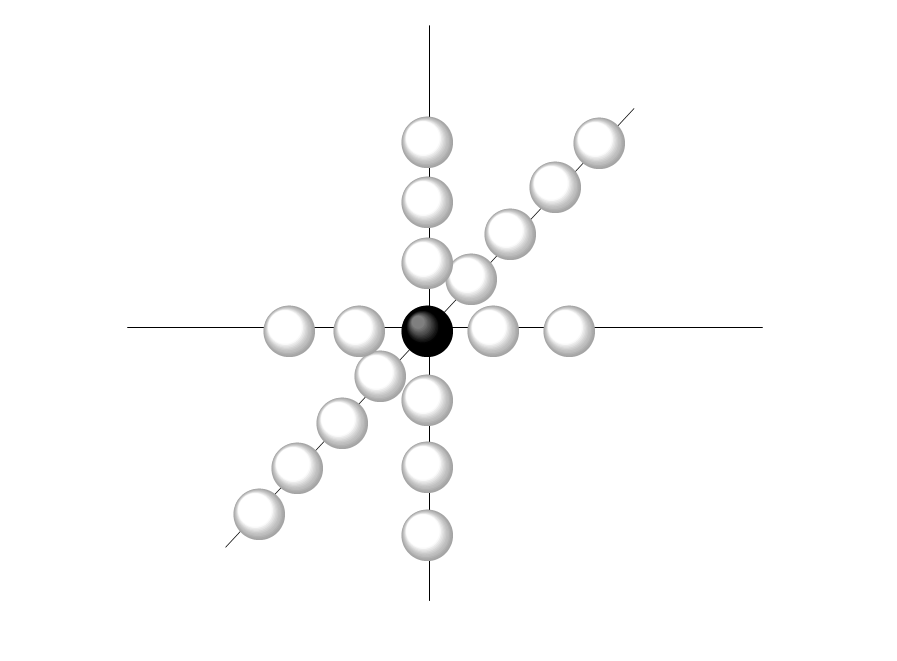}%
\caption{Illustration of the neighborhood of a site in the $(2,3,4)$ model. The $z$ axis is towards the reader.}%
\label{neighborhoodfig}%
\end{figure}

An important tool that we will use is dimensional reduction \cite{schonmann}. 
Suppose a certain large threedimensional rectangular block is occupied. Then all sites in a slab (we call the set of sites outside the block that are adjacent to one of the faces, a slab), already have some number $k' = a$, $b$ or $c$ of occupied sites in the intersection of their neighborhood with the rectangular block. Thus, once the rectangular block is occupied, a slab in direction $x$ will become occupied in the $(a,b,c)$ model, if the slab is (internally) spanned in the $(b,c)$ model. For the $(a,b,c)$ model, we will call the $(b,c)$ model the {\em reduced model} in the $x$ direction, and likewise define the reduced models in the $y$ and $z$ directions. 

Our main example is the $(1,1,2)$ model. The $(1,1,2)$ model has two hard growth directions and one easy one. Namely, in the $z$ direction the reduced model is $(1,1)$, but for the other directions it is $(1,2)$. Our main result however is valid for all $(a,b,c)$ models.

\section{Terminology}

In this section we present some common terms in bootstrap percolation, which we will use throughout the paper in the sense explained below.

\paragraph{Threshold lengths, sharp thresholds}
Call $\prob([0,L]^d \mbox{ internally spanned})$ the probability that for a bootstrap percolation model on a finite cube $[0,L]^d$, in the final configuration all sites are occupied. 
The typical behavior of bootstrap percolation models is that there is a sharp threshold length, this means that 
there is a function $f(\frac{1}{p})$ such that, as $p \to 0$,
\begin{eqnarray}
\nonumber \prob([0,L]^d \mbox{ i.s.}) \to 0 & \mbox{ for } & L < e^{(1- \varepsilon)f(\frac{1}{p})} \ (\mbox{in} \ d=2), \ \mbox{ or} \\
\nonumber \prob([0,L]^d \mbox{ i.s.}) \to 0 & \mbox{ for } & L < e^{e^{(1- \varepsilon)f(\frac{1}{p})}} \ (\mbox{in} \ d=3), \  \mbox{ and} \\ 
\nonumber \prob([0,L]^d \mbox{ i.s.}) \to 1 & \mbox{ for } & L > e^{(1+ \varepsilon)f(\frac{1}{p})} \ (\mbox{in} \ d=2), \ \mbox{ or} \\
\nonumber \prob([0,L]^d \mbox{ i.s.}) \to 1 & \mbox{ for } & L > 
e^{e^{(1+ \varepsilon)f(\frac{1}{p})}} \ (\mbox{in} \ d=3). \\
\label{threshold}
\end{eqnarray}

We will call $e^{f(\frac{1}{p})}$ (in $d=2$), or $e^{e^{f(\frac{1}{p})}}$ (in $d=3)$ the threshold length $L^{th}(p)$.

Inversely, there is a percolation theshold $p_{th}(L)= Cg(L)$, such that asymptotically for increasing $L$ for all positive $\varepsilon$ there will be percolation if $p > (1+\varepsilon)g(L)$ and no percolation if $p < (1- \varepsilon)g(L)$, with high probability.  The first proof of a sharp threshold length was for the isotropic model in dimension 2 by Holroyd \cite{holroyd}. Since then, several more sharp threshold results have been obtained \cite{BBDM,BBM,DE,DH,holroydmodified, HLR}.

\paragraph{Order of threshold length}
There are many bootstrap percolation models for which there is no proof (yet) of a sharp threshold length. For some of those 
less precise results
known, namely, lower and upper bounds for $L^{th}(p)$ similar to those of \eqref{threshold}, but with different multiplicative constants $\gamma$ and $\Gamma>\gamma$, namely:
\begin{eqnarray}
\nonumber \prob([0,L]^d \mbox{ i.s.}) \to 0 & \mbox{ for } & L < L^-(p) = e^{\gamma f(\frac{1}{p})} \ (\mbox{in} \ d=2), \ \mbox{ or} \\
\nonumber \prob([0,L]^d \mbox{ i.s.}) \to 0 & \mbox{ for } & L < L^-(p) = e^{e^{\gamma f(\frac{1}{p})}} \ (\mbox{in} \ d=3), \  \mbox{ and} \\ 
\nonumber \prob([0,L]^d \mbox{ i.s.}) \to 1 & \mbox{ for } & L > L^+(p) = e^{\Gamma f(\frac{1}{p})} \ (\mbox{in} \ d=2), \ \mbox{ or} \\
\nonumber \prob([0,L]^d \mbox{ i.s.}) \to 1 & \mbox{ for } & L > L^+(p) = 
e^{e^{\Gamma f(\frac{1}{p})}} \ (\mbox{in} \ d=3). \\
\label{thresholdorder}
\end{eqnarray}

In such a case we say that the order of the threshold length is known. 

In this paper, we focus on $f(\frac{1}{p})$, that is, orders of the threshold length of the following kind:
\begin{itemize}
	\item For $(a,b)$ models, we are interested in the order of $\ln L^{th}(p)$,
	\item For $(a,b,c)$ models, we are interested in the order of $\ln \ln L^{th}(p)$. 
\end{itemize}
Indeed, our main result is the determination of the order of $\ln \ln L^{th}(p)$ for the general $(a,b,c)$ model. 
However, since strictly speaking we do not prove that a sharp threshold length exists, we will not use this term in our theorem.

In the remainder of this paper, we will use subscripts $a,b$ and $a,b,c$ to refer to the $(a,b)$ model resp. the $(a,b,c)$ model.
Sometimes we will use the term $L^{th}_{a,b}(p)$ in cases where no sharp threshold result is known, that is, $f_{a,b}(1/p)$ is known only up to a constant. We will only use this abusive notation in cases where the constant is unimportant, to avoid cumbersome elaborations in terms of the lower and upper bounds. 

\smallskip

We conjecture that in fact for all the models we consider, that is, every $(a,b)$ model in two dimensions, and every $(a,b,c)$ model in three dimensions, there exists a sharp threshold length.

\paragraph{Critical droplets}

Often, the proof for an upper bound $L^+(p)$ for $L^{th}(p)$ involves the notion of a ``critical droplet''. This is an occupied connected set of sites of a size and shape such that, if all other sites in a finite or infinite lattice are independently occupied with probability $p$, then the droplet will continue to grow with high probability. Since the occurrence of such a critical droplet is (for fixed $p$) a local event, if $L$ is large enough, that is to say $L$ increases fast enough as $p \to 0$, then there will with high probability be a critical droplet in the volume of linear size $L$.

As a simple example, consider isotropic bootstrap percolation in one 
dimension, that is, on the interval $[0,L]$. Suppose that $L$ increases faster 
than $1/p$ as $p \to 0$. In this case, the critical droplet is one occupied 
site. The probability that somewhere on the line there is an 
occupied site, tends to 1. Also, this critical droplet grows with probability 1: once there is at least one occupied site, then 
the whole line is spanned. 

Note that there are multiple possibilities for choosing a critical droplet. The smaller the size of the critical droplet one chooses, the tighter upper bound for $L^{th}(p)$ one obtains. In this paper, our choice is always a certain occupied rectangle (rectangular block). 
(We expect that this choice of rectangles is not optimal; in studies such as \cite{holroyd} for example, the critical droplet is more subtle: it is a certain set of occupied sites such that a rectangle of size $O(1/p)$ is internally spanned rather than fully occupied.) 

Suppose the number of occupied sites in the critical droplet is $V$. Then the probability $P$ that a fixed site is in a critical droplet, is of order $p^V$. Then, disregarding some corrections due to the finite size of the droplet (which are negligible if $L >> V$, and $p$ small):
\[
\prob([0,L]^d \mbox{ i.s.}) \geq 1-(1-P)^{L^d} \approx 1 - e^{-PL^d},
\]
where we used that $(1-P)^{\frac{1}{P}}$ tends to $e^{-1}$ as $p \to 0$.
We can alternatively interpret this formula as follows: suppose we have a Poisson point process with parameter $P$. Then the above probability is the probability that there is at least one Poisson point in the cube $[0,L]^d$. For this reason, one also calls $P$ the density of critical droplets. 
 We see from this expression that $\prob([0,L]^d \mbox{ i.s.})$ tends to 1 if $L^d$ tends to infinity faster than $1/P$. We conclude that 
for every $P$ denoting the density of a certain choice of critical droplets, 
\begin{equation}
L^+(p) = O(P^{-1/d}).
\label{density+}
\end{equation}
Inversely, we can say that if $L<L^-(p)$, then with high probability there is no critical droplet of any kind, so that for every choice of critical droplet, we have
\begin{equation}
P = O(L^-(p))^{-d}.
\label{density-}
\end{equation}

\paragraph{Supercritical size}
We say that a volume is of supercritical size if the probability that it is internally spanned, tends to 1 as $p \to 0$. For example, the volume $[0,L]^d$, with $L>L^{+}(p)$, is of supercritical size. However, a volume of supercritical size does not need to be cubic. We will also consider rectangular volumes of supercritical size. For example, a rectangle is of supercritical size if it consists of at least $(L^{+}(p))^d$ sites, and its shape is such that it is much larger than the critical droplet, in every direction. 

\section{Main result}

Our main result is that the order of $\ln \ln L^{th}(p)$ for the general $(a,b,c)$ model, depends only on $a$ and $b$. 
We state this result as follows:

\begin{theorem}
For the $(a,b,c)$ model, 
there exist constants $\gamma_{a,b,c}$ and $\Gamma_{a,b,c}$ such that, as $p \to 0$, 
\begin{eqnarray*}
\nonumber \prob([0,L]^3 \mbox{ i.s.}) \to 0 & \mbox{ for } L < e^{e^{\gamma_{a,b,c}f_{a,b}(1/p)}},\\
\nonumber \prob([0,L]^3 \mbox{ i.s.}) \to 1 & \mbox{ for } L > e^{e^{\Gamma_{a,b,c}f_{a,b}(1/p)}}.\\
\end{eqnarray*}
where 
\begin{itemize}
\item if $a=b$ then  $f_{a,a}(1/p) = p^{-a} + o(p^{-a})$,
\item of $a<b$ then  $f_{a,b}(1/p) = p^{-a}\ln^2 p + o(p^{-a} \ln^2 p)$.
\end{itemize}
\label{maintheorem}
\end{theorem}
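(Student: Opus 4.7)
The plan is to use dimensional reduction to reduce the three-dimensional spanning problem to two dimensions, so that both the upper and the lower bound for $L^{th}_{a,b,c}(p)$ end up governed by the 2D threshold length $L^{th}_{a,b}(p)$ of the ``easiest'' reduced model. Since $a \leq b \leq c$, of the three reduced 2D models---$(a,b)$ in the $z$-direction, $(a,c)$ in the $y$-direction, $(b,c)$ in the $x$-direction---the $(a,b)$ model has the smallest $f$, and one therefore expects it to control the leading order of $\ln \ln L^{th}_{a,b,c}(p)$.

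For the upper bound, I would exhibit a critical droplet shaped like a rectangular block of dimensions roughly $K L^{th}_{a,b}(p) \times K L^{th}_{a,b}(p) \times H$, with $K$ a large constant and $H$ large enough to enable growth in the hard directions (at least $H \gtrsim L^{th}_{b,c}(p)$, so the $yz$- and $xz$-slabs are supercritical for their respective reduced models). First I would verify that, once such a block is internally spanned, dimensional reduction forces every adjacent $z$-slab to be internally spanned in the $(a,b)$ model with probability tending to $1$, and analogously for the $x$- and $y$-slabs, so the block grows monotonically in all three directions until it covers $[0,L]^3$ with high probability. Then I would estimate the density of such a seed, constructed hierarchically (first building a 2D-spanned $xy$-slab, then promoting it to the 3D block via further slab spannings); this density is of order $\exp\bigl(-\Theta(L^{th}_{a,b}(p))\bigr)$, so once $L^3$ exceeds the inverse of this density, a seed is present in $[0,L]^3$ with high probability. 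Together this gives $L^+(p) = \exp(\exp(\Gamma_{a,b,c}\, f_{a,b}(1/p)))$.

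For the lower bound I would adapt the Aizenman--Lebowitz argument. If $[0,L]^3$ is internally spanned, then at every intermediate scale $\ell \leq L$ there is an internally spanned rectangular block of linear size between $\ell$ and $2\ell$. Choosing $\ell \asymp L^{th}_{a,b}(p)$ and applying dimensional reduction in each coordinate direction, I would show that any such ``critical'' internally spanned 3D block must contain an internally 2D-spanned cross-section in the $(a,b)$ model, an event of probability bounded by $\exp\bigl(-\Theta(L^{th}_{a,b}(p))\bigr)$ from the corresponding 2D lower bound. A union bound over positions in $[0,L]^3$ then forces $L \geq \exp(\exp(\gamma_{a,b,c}\, f_{a,b}(1/p)))$.

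The main obstacle will be the upper-bound growth calculation. The column of cross-section $K L^{th}_{a,b}(p) \times K L^{th}_{a,b}(p)$ has to successfully grow across as many as $L^{th}_{b,c}(p)$ successive $z$-slabs before it becomes tall enough to trigger growth in the $y$- and $x$-directions, so the cumulative failure probability must stay negligible. This forces $K$ to be chosen just large enough that the per-slab failure probability is smaller than $1/L^{th}_{b,c}(p)$, while at the same time keeping the seed density from becoming too small. The balance is transparent when $a=b$; when $a<b$ the logarithmic corrections in $f_{a,b}$ enter non-trivially in both 2D thresholds, and some care will be needed to match $\gamma_{a,b,c}$ and $\Gamma_{a,b,c}$ to the same functional form $f_{a,b}(1/p)$ uniformly in the regimes.
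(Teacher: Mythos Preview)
Your upper bound (the part showing $\prob \to 1$ for large $L$) is close to the paper's in spirit, though your description of the seed as ``a 2D-spanned $xy$-slab, then promoted'' is imprecise: a slab that is internally spanned for the two-dimensional $(a,b)$ rule is not thereby occupied in the three-dimensional $(a,b,c)$ model, since in 3D a site needs $a+b+c$ occupied neighbours rather than $a+b$. The paper sidesteps this by taking as seed a \emph{fully occupied} $N\times N\times O(1)$ block with $N\approx (L^{+}_{a,b})^{1+\varepsilon}$; this has density $p^{\Theta(N^2)}$, and dimensional reduction then genuinely applies to each adjacent $z$-slab. With the seed fixed this way, your growth scheme (first grow in $z$ until the lateral faces are supercritical for $(a,c)$ and $(b,c)$, then grow in all directions) and the resulting double-exponential bound on $L^+$ are correct and essentially match the paper's Section~4.

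Your lower bound, however, contains a real gap. The assertion that an internally spanned 3D block of side $\ell$ ``must contain an internally 2D-spanned cross-section in the $(a,b)$ model'' is false: the 3D rule may occupy a site using neighbours spread over as many as $2c+1$ different $z$-levels, so no single $z$-slice need be spanned in the 2D sense, and there is no dimensional-reduction inequality in this direction. This is precisely the obstacle that forces the paper to import the Cerf--Cirillo machinery. The paper slices the cube into slabs of width $c+1$, collapses each slab to a 2D minicolumn configuration, declares a slab \emph{flooded} if its 2D process produces a component of near-critical size, and then bounds the probability that a $z$-crossing threads through the non-flooded slabs by a stepping-stone argument controlling the expected 2D cluster size (Lemmas~\ref{steppingstonebound} and~\ref{enhancedlemma}). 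Only after this enhancement does the crossing probability become expressible in terms of the 2D critical-droplet density $\bar P_{a,b}$, and the Aizenman--Lebowitz-type rectangle lemma (Lemma~\ref{spannedsubsets}) is applied to a \emph{weakly crossed} block at scale $\asymp (L^{-}_{a,b})^{1/4}$, not to a 2D-spanned slice. Your outline is missing this entire layer; without it the lower bound does not go through.
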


Since the proof of this theorem involves dimensional reduction, we use information on $f_{a,b}(1/p)$, the order of the threshold length of the $(a,b)$ model. In several cases a sharp threshold length is known, namely, for the $(1,1)$ model \cite{holroyd} and for the $(1,b)$ model with $b>1$ \cite{DE}. In these cases we can specify the constant $\Gamma_{a,b,c}$. For example, it will turn out that if $a=b=1$, then the constant $\Gamma_{1,1,c}$ can be chosen to be twice the Holroyd constant $2C_H = \frac{\pi^2}{9}$. 

Furthermore, from the analysis of Duminil-Copin and Holroyd \cite{DC,DH}, it follows that for the $(a,a)$-model a sharp threshold result holds, namely that there is a constant $C$ such that \eqref{threshold} holds with $f_{a,b}(\frac{1}{p}) = e^{Cp^{-a}+o(p^{-a})}$. We summarize the current knowledge in Table \ref{knownorderstable}.

\medskip

\begin{table}
\centering
\begin{tabular}{ll}
$(a,b)$ & order of $\ln L^{th}_{a,b}(p)$ \\  [1ex] \hline \\ [-1.5ex]
$(1,1)$ & ${\frac{1}{p}}$ \cite{holroyd} \\
$(1,b)$ with $b>1$ & ${\frac{1}p \ln^2\frac 1p}$  \cite{DE}\\
$(a,a)$ & ${\frac{1}{p^a}}$ first mentioned in \cite{GG} (Section 7), see also \cite{DC,DH}\\
\end{tabular}
\caption{All known sharp thresholds for $(a,b)$ models.} 
\label{knownorderstable}
\end{table}

For the general case of the $(a,b)$ model with $1<a<b$ it is not known whether a sharp threshold exists. For the $(a,b)$ model with $a<b$, we derive the following result (see Section \ref{generallowsection}), which is sufficient for our purposes:

\begin{claim}~
Let $a <b$, then for the $(a,b)$ model, there exist constants $\gamma_{a,b}$ and $\Gamma_{a,b}$, such that
\begin{eqnarray}
\nonumber \prob([0,L]^2 \mbox{ i.s.}) \to 0 & \mbox{ for } & L < L^-_{a,b}(p) = e^{\gamma_{a,b} p^{-a}\ln^2 p}, \\
\nonumber \prob([0,L]^2 \mbox{ i.s.}) \to 1 & \mbox{ for } & L > L^+_{a,b}(p) = e^{\Gamma_{a,b} p^{-a}\ln^2 p}.\\
\end{eqnarray}
\label{claim}
\end{claim}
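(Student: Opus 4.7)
The approach is the one familiar from Gravner--Griffeath and Duminil-Copin--van Enter for two-dimensional anisotropic models: perform dimensional reduction down to the two one-dimensional bootstrap processes, construct an explicit critical (internally spanned) rectangle for the upper bound, and use an Aizenman--Lebowitz scaling argument for the lower bound. Dimensional reduction (Section~1) gives that the reduced model in the easy direction $y$ is 1D $a$-bootstrap and in the hard direction $x$ is 1D $b$-bootstrap; consequently a horizontal segment of length $m$ is spanned with probability $q_y(m) = 1-(1-p^a)^{\lfloor m/a\rfloor}$ (of order $mp^a$ for $mp^a\ll 1$ and $\to 1$ for $m \gg p^{-a}$), and a vertical segment of length $n$ is spanned with probability $q_x(n)$ whose transition occurs at $n\sim p^{-b}$.

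For the upper bound I would take as critical droplet an internally spanned rectangle of dimensions $M\times N$, with $M$ a large multiple of $p^{-a}\log(1/p)$ and $N$ a large multiple of $p^{-b}\log(1/p)$: once entirely occupied, such a rectangle grows indefinitely with probability bounded away from zero (Borel--Cantelli over successive ring additions, using that $q_y(M)$ and $q_x(N)$ converge to $1$ exponentially fast). The bulk of the work is to estimate from below the probability that a specific such rectangle is internally spanned, which I would do following Holroyd's method by exposing a growth hierarchy and taking the product of step-wise row/column spanning probabilities. Along the dominant easy direction one has $q_y(m)\approx mp^a/a$ for $m$ running from $O(1)$ up to $\sim p^{-a}$, so the sum of $-\log q_y(m)$ along the hierarchy contributes an exponent of order $p^{-a}\log^2(1/p)$, while the hard-direction steps contribute only lower order. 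A Poisson-droplet argument as in \eqref{density+} then yields $L^+_{a,b}(p)\le \exp(\Gamma_{a,b}\, p^{-a}\log^2(1/p))$.

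For the lower bound I would invoke the Aizenman--Lebowitz lemma: if $[0,L]^2$ is internally spanned, then there is an internally spanned rectangle at every intermediate semiperimeter scale. Choosing this scale to be of order $p^{-a}\log(1/p)$ (just above the $y$-growth threshold) and bounding, by an enumeration over growth histories, the probability that any specific rectangle at this scale is internally spanned by $\exp(-\gamma_{a,b}\, p^{-a}\log^2(1/p))$, a union bound over the polynomially-many positions and dimensions in $[0,L]^2$ forces $L < \exp(\gamma_{a,b}\, p^{-a}\log^2(1/p))$ for the internal-spanning probability to vanish.

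The main obstacle is the variational content shared by both directions: one must verify that the \emph{optimal} growth trajectory producing an internally spanned rectangle at the critical scale has log-probability of order exactly $p^{-a}\log^2(1/p)$, and that no cleverer trajectory is cheaper. The $\log^2(1/p)$ factor is delicate; it arises from integrating logarithmic step-costs along the hierarchy and is the same mechanism that produces the $\log^2$ in the sharp-threshold result for the $(1,b)$ model of Duminil-Copin--van Enter. The careful bookkeeping of competing trajectories that interpolate between rapid $y$-growth and costly $x$-growth is the step I expect to require the most work; since we do not claim a sharp threshold, matching $\gamma_{a,b}$ to $\Gamma_{a,b}$ is not part of the task.
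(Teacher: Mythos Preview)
Your lower-bound plan (Aizenman--Lebowitz at the scale $p^{-a}\ln(1/p)$, combined with a bound on the probability that a fixed rectangle at that scale is internally spanned) is essentially what the paper does: it generalises the key inequality of van Enter--Hulshof to
\[
\prob(\text{a fixed } x\times y \text{ rectangle is i.s.}) \le \min\bigl\{(1-(1-\tilde p^{\,b})^y)^x,\ (1-(1-\hat p^{\,a})^x)^y\bigr\},
\]
and then reruns the van Enter--Hulshof computation with $x=C_2 p^{-a}\ln(1/p)$ and $y=p^{-b+1/2}$. So on that half you and the paper agree.

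For the upper bound, however, the paper takes a much more elementary route than the Holroyd-type hierarchy you propose. Its critical droplet is simply a \emph{fully occupied} strip of width $b$ and length $Cp^{-a}\ln(1/p)$ (the natural generalisation of the $2\times\frac{C'}{p}\ln\frac1p$ strip from van Enter--Hulshof). One checks directly that such a strip grows with high probability into a $Cp^{-a}\ln(1/p)\times p^{-b}$ rectangle, and the probability that the strip is fully occupied is
\[
p^{\,bCp^{-a}\ln(1/p)} \;=\; e^{-bC\,p^{-a}\ln^2(1/p)},
\]
which gives $L^+_{a,b}(p)\le e^{\Gamma_{a,b}p^{-a}\ln^2(1/p)}$ immediately. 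The two $\ln$ factors are transparent here: one from the length of the strip (needed so that each successive row is spanned with probability $1-o(1)$), the other from the cost $\ln(1/p)$ per occupied site. No variational analysis is required.

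By contrast, your heuristic for the $\ln^2$ does not actually produce it: the sum $\sum_{m=1}^{p^{-a}}\bigl(-\log q_y(m)\bigr)\approx\sum_{m=1}^{p^{-a}}(a\ln\tfrac1p-\ln m)$ is of order $p^{-a}$, not $p^{-a}\ln^2(1/p)$, because the two leading terms cancel. The $\ln^2$ in the DE-style analysis comes from the interplay between the two growth directions along a non-trivial trajectory, not from easy-direction steps alone. So if you insist on the hierarchy approach you would still need that more delicate two-dimensional variational argument; since only the order is claimed, the fully-occupied-strip shortcut is both simpler and sufficient.
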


{\bf Remark.} 
For the $(1,1,1)$ model, a sharp threshold result is already known \cite{BBM}. One might expect that this result should give us a lower bound for $\gamma_{a,b,c}$, since it seems natural to expect (and indeed we do) that an $(a,b,c)$ model with a larger neighborhood should have more difficulty in growing, and in filling up a volume. 
However, a direct inequality is not that obvious. Indeed, as a pair of occupied sites on a line parallel to the $y$-axis at distance 4 in the $(1,2)$-model can cooperate, whereas they cannot in the $(1,1)$-model (as then they will never belong to the neighborhood of the same site) a configuration-wise ordering is excluded. 
 
Thus, we present our intuition as a series of conjectures: 

\begin{conjecture}~
\begin{itemize}
\item
If $a<b$, then there is a constant $C_{a,b}$ such that 
\[
\ln L_{a,b}^{th}(p) = C_{a,b}p^{-a} \ln^2(1/p) + o(1/p^{-a} \ln^2(1/p)),
\]
\item If $b'>b$, then $C_{a,b'} > C_{a,b}$.
\end{itemize}
\label{nextproject}
\end{conjecture}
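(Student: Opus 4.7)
The plan is to extend the variational framework of Holroyd \cite{holroyd} (for the $(1,1)$-model) and Duminil-Copin--van Enter \cite{DE} (for the $(1,b)$-model) to the general $(a,b)$-model with $1 \leq a < b$, in a form sharp enough that the monotonicity statement can then be read off from the resulting explicit formula. Setting up the rectangle-growth hierarchy, in the easy ($y$) direction the per-site success probability of one-row growth is of order $\binom{b}{a}p^a$ (find $a$ occupied sites in the adjacent $y$-window of length $b$), while in the hard ($x$) direction it is of order $\binom{2b}{b}p^b$. An Aizenman--Lebowitz rectangle decomposition, combined with the sub-additive approximation used in \cite{holroyd,DE}, reduces the internal-spanning probability on $[0,L]^2$ to an integral over the trajectory of the rescaled aspect ratio. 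After rescaling the hard direction by $p^{-a}\ln(1/p)$ (so that the two mechanisms balance) and the easy direction by the dual scale, I expect
\[
-\ln\prob([0,L]^2 \text{ i.s.}) \;=\; (1+o(1))\,p^{-a}\ln^2(1/p)\cdot\inf_{\phi}\int_0^{\infty}\beta_{a,b}(\phi(t))\,dt,
\]
with $\phi$ monotone and $\beta_{a,b}$ a strictly increasing convex rate function of the form $\beta_{a,b}(z)=-\ln(1-e^{-h_{a,b}(z)})$, where $h_{a,b}$ is the local growth exponent assembled from the per-site probabilities above. The constant $C_{a,b}$ is then by definition the value of this variational problem.

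For the monotonicity $C_{a,b'}>C_{a,b}$ with $b'>b$, direct configuration-wise coupling is excluded, as remarked by the authors. I would instead compare rate functions pointwise. Hard-direction growth in the $(a,b')$-model requires $b'$ occupied $y$-neighbours in a window of length $2b'$, versus $b$ neighbours in a window of $2b$ for $(a,b)$; the single-site success probabilities differ by a factor of order $p^{b'-b}\to 0$, so the local growth exponents satisfy $h_{a,b'}(z)>h_{a,b}(z)$ uniformly on every compact subinterval of $(0,\infty)$. In the easy direction the number of required seeds is $a$ in both models, so no compensating decrease appears. Because the optimal profile for $C_{a,b}$ concentrates on a compact interval, pointwise strict domination of $\beta_{a,b'}$ over $\beta_{a,b}$ yields $C_{a,b'}>C_{a,b}$ via an elementary perturbation of the variational infimum.

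The principal obstacle, and the reason the statement is offered as a conjecture rather than a theorem, lies entirely in the sharp-threshold half. All known sharp-threshold proofs in bootstrap percolation approximate corner-growth events by products of independent single-site events to within $1+o(1)$ factors in the exponent. When $a>1$ the elementary building block is no longer a single occupied site but a cluster of $a$ cooperating sites inside an enlarged window, leading to genuinely non-trivial correlations between successive growth steps. Controlling these correlations to the precision required to promote the coarse estimate of Claim~\ref{claim} to a matching sharp constant is exactly the missing step; once it is in place, the monotonicity part follows routinely from the explicit variational formula.
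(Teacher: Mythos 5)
This statement is a \emph{conjecture} in the paper: the authors explicitly do not prove it, remarking only that "the first half \ldots might be derived by extending the techniques from \cite{DE}" and that it is known for $a=1$ by \cite{DE}. So there is no proof in the paper to compare yours against, and your submission is, by your own admission in its final paragraph, a research programme rather than a proof. The gap you name --- that for $a>1$ the elementary growth event is a cooperating $a$-cluster rather than a single site, so the product-of-independent-events approximation underlying every known sharp-threshold proof no longer applies directly --- is real, and it is essentially the same obstruction that led the authors to state the result as a conjecture and to prove only the order-of-magnitude bounds of Claim~\ref{claim}. Until that step is supplied, the "explicit variational formula" from which you propose to read off both the existence of $C_{a,b}$ and its monotonicity does not exist, so neither bullet of the conjecture is established.

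Two further cautions on the second bullet, should the variational framework be completed. First, your assertion that "no compensating decrease appears" in the easy direction is too quick: increasing $b$ to $b'$ also enlarges the window in which the $a$ cooperating seeds may sit, so the easy-direction per-site probability \emph{increases} (by a constant factor in front of $p^a$). One expects this to wash out at leading order, since constant prefactors shift the logarithm by $O(1)$ against a $\ln(1/p)$ main term, but this must be checked rather than asserted. Second, the pointwise comparison $\beta_{a,b'}>\beta_{a,b}$ presupposes that the two variational problems are posed on the same rescaled axes; since the natural rescaling of the hard direction involves $b$ through the balance point $y\sim p^{-b}$, you must first fix a common normalization before a pointwise inequality between rate functions is meaningful. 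Your overall heuristic picture is consistent with the authors' intuition and with the $a=1$ case, but as it stands the proposal does not prove the statement, and the paper does not either.
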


\begin{conjecture}~
\begin{itemize}
\item 
There is a constant $C_{a,b,c}$ such that 
\[
\ln\ln L_{a,b,c}^{th}(p) = C_{a,b,c} f_{a,b}(1/p) + o(f_{a,b}(1/p)),
\]
where $f_{a,b}(1/p)$ is as in Theorem \ref{maintheorem}.
\item If $a'\geq a$, $b'>b$, and $c'\geq c$ then $\ln\ln L_{a',b',c'}^{th}(p) \geq \ln\ln L_{a,b,c}^{th}(p)$.
\end{itemize}
\end{conjecture}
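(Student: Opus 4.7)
The conjecture has two parts: a sharp asymptotic that would upgrade Theorem \ref{maintheorem} to matching constants, and a monotonicity statement. My plan is to reduce both to refinements of existing techniques, while being candid about the obstructions.

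For the first item, the plan is to sharpen the upper and lower bounds of Theorem \ref{maintheorem} by replacing its coarse block/slab estimates with a three-dimensional variational calculation in the spirit of \cite{holroyd,BBM,DH}. The natural input is a sharp two-dimensional threshold for the $(a,b)$ reduced model: the known results of \cite{holroyd,DE,DC,DH} when $a=1$ or $a=b$, and Conjecture \ref{nextproject} when $1<a<b$. Assuming $\ln L_{a,b}^{th}(p) \sim C_{a,b}f_{a,b}(1/p)$, one tracks the growth of a nucleus through three scales: a microscopic seed; a mesoscopic rectangular block of size comparable to the two-dimensional critical droplet; and macroscopic slab-by-slab growth, dominated by slabs added in the $z$ direction via the $(a,b)$ reduced model. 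A Holroyd-style variational argument applied to the logarithm of the slab-addition probability at each scale should then output $C_{a,b,c}$ as an explicit extremum.

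The main obstacle for the upper bound is the one familiar from two-dimensional sharp-threshold proofs: avoiding double-counting of growth paths, and handling cooperative events in which several slabs form together. The three-dimensional anisotropic setting adds the new subtlety that one must check that growth through the harder reduced models $(a,c)$ and $(b,c)$ contributes only to subleading terms, so that the $z$ direction truly dictates the sharp constant. The lower bound in turn requires a three-dimensional hierarchical pod decomposition, applying the matching two-dimensional lower-bound machinery slab by slab.

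For the monotonicity item, configuration-wise coupling is blocked by the mechanism described in the Remark preceding the conjectures. My plan is instead to read monotonicity off from the variational formula for $C_{a,b,c}$ produced by the first item: enlarging $a$, $b$, or $c$ only raises the cost of each slab-filling event in the relevant reduced two-dimensional model, so the extremum should be manifestly non-decreasing. When the parameter change moves $f_{a,b}(1/p)$ into a strictly larger asymptotic form --- for instance $a'=a=b<b'$, which promotes $p^{-a}$ to $p^{-a}\ln^2 p$, or any case with $a'>a$, where $p^{-a'}$ beats $p^{-a}\ln^2 p$ as $p\to 0$ --- the inequality is already visible from Theorem \ref{maintheorem} alone, without invoking sharp constants. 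The genuinely delicate case is $a'=a<b<b'$ with both pairs $(a,b)$ and $(a,b')$ lying in the $a<b$ regime, where $f_{a,b}(1/p)$ and $f_{a,b'}(1/p)$ share the same $p^{-a}\ln^2 p$ order and the separation must come from monotonicity of $C_{a,b}$ in $b$, i.e., Conjecture \ref{nextproject}(ii). This last input appears to be the true bottleneck, and I expect establishing it (together with the corresponding strict monotonicity of $C_{a,b,c}$ in $c$) to be the hardest step of the program.
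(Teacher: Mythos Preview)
There is nothing to compare against: the paper does not prove this statement. It is stated as a conjecture, and the only commentary the authors offer is the sentence following it, which concerns Conjecture~\ref{nextproject} rather than this one, and explicitly says that a full proof is outside the scope of the paper. So the benchmark you are being measured against does not exist.

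Your proposal is, by your own description, a research program rather than a proof, and on that level it is sensible: reducing the sharp constant to a Holroyd/BBM-style variational problem fed by a sharp two-dimensional input, and reading monotonicity off the resulting formula, is the natural line of attack. But you should be clear that what you have written establishes nothing. The first item is contingent on Conjecture~\ref{nextproject} (the sharp $(a,b)$ threshold for $1<a<b$), which is itself open, and even granting that input the three-dimensional hierarchical/pod argument you allude to has not been carried out for anisotropic neighborhoods. For the second item, your observation that the cases where $f_{a,b}$ changes order follow already from Theorem~\ref{maintheorem} is correct and worth recording, but the residual case $a'=a<b<b'$ you identify as the bottleneck is exactly the content of Conjecture~\ref{nextproject}(ii), so your plan for monotonicity is circular unless you have an independent route to that inequality. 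In short: reasonable outline, no proof, and the paper makes no stronger claim either.
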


We expect that the first half of Conjecture \ref{nextproject} might be derived by extending the techniques from \cite{DE}. Indeed, by \cite{DE}, the conjecture holds for $a=1$. Nevertheness, the full proof is outside the scope of the present paper.

\section{The lower bound}

We first explain the lower bound in the case $(a,b,c) = (1,1,c)$ in some detail, as an instructive example. 

\subsection{The lower bound for the $(1,1,c)$ model}

In this section, we show that for all $L(p)$ larger than the value stated in Theorem \ref{maintheorem}, the probability that the configuration is internally spanned, tends to 1.
We choose an occupied rectangular block as critical droplet. This rectangular block will keep growing if the areas of its faces are large enough. For instance, if the area of the face in the $z$ direction is larger than $e^{2C_H(1+\varepsilon)/p}$, then the slab next to it is spanned with high probability, because its size is supercritical for the $(1,1)$ model. 

\begin{figure}%
\centering
\includegraphics[width=8cm]{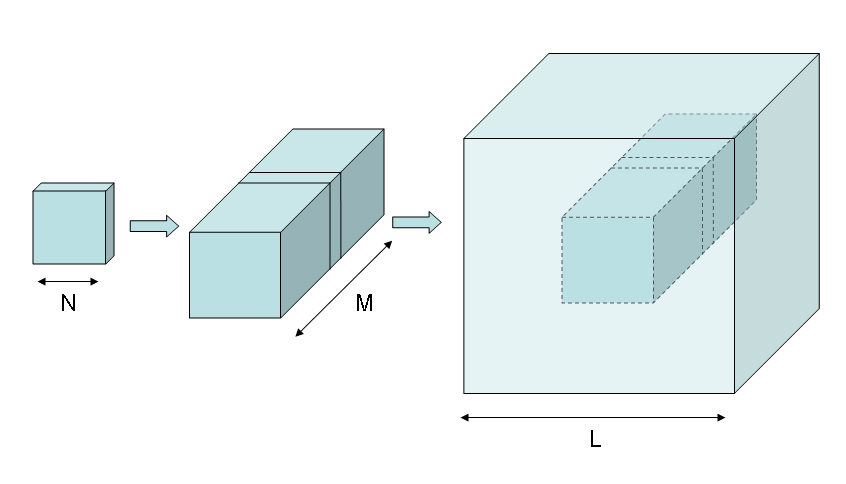}%
\caption{Illustration of how the critical droplet grows in the $(1,1,c)$ model (not to scale). The $z$ direction is towards the reader. First the droplet grows only in the $z$ direction, then in all directions.}%
\label{112plaatje}%
\end{figure}

Naively, one could suppose that we need a critical droplet with one face of supercritical size for the $(1,1)$ model, and two of supercritical size for the $(1,c)$ model. However, we can do better than that, following essentially the same line of thought as in \cite{EH} which is similar to that in \cite{ADE} and which was based on an unpublished observation of Roberto Schonmann. There a suitable critical droplet for the $(1,2)$ model was found to be a strip of length $\frac {C'}p \ln \frac 1p$, and width 2, with $C'$ a large enough constant. Clearly the width of this strip is not supercritical, but the strip grows with large probability into a rectangle of size $\frac Cp \ln \frac 1p$ times $\frac 1{p^2}$, which is of supercritical size in both directions. The probability that such a critical droplet is at a fixed position, is $p^{\frac {C'}p \ln \frac 1p} = e^{-\frac {C'}p \ln^2 \frac 1p}$. 

For the $(1,1,c)$ model, we choose as critical droplet a rectangular block of size $N$ by $N$ by 2, with $N$ such that the droplet will grow sufficiently far in the $z$ direction. More specifically, let us define
\begin{equation}
P_{1,1} = e^{-\frac{2C_H}{p}},  
\label{specificP}
\end{equation}
which is the probability that a fixed site is 
in a critical droplet in the $(1,1)$ model. Then a square of size $N^2 \geq e^{\frac{2C_H(1+\varepsilon)}{p}}$ is supercritical for the $(1,1)$ model, because with high probability there will be a critical droplet in it.

We will choose $N^2=(P_{1,1})^{-(1+\epsilon)}$ slightly larger than minimally supercritical, because we need that many slabs of that size will get occupied with high probability, rather than just one. For our critical droplet to grow into a rectangular block with all faces of supercritical size, we need that $M$ adjacent slabs get occupied with large probability, with $M$ such that $MN \geq (L^{th}_{1,c}(p))^2$. As was proved in \cite{EH} (and explained above), there is a constant $\Gamma_{1,2}$ such that $L^{+}_{1,2}(p) \leq e^{\frac {\Gamma_{1,2}}p \ln^2 \frac 1p}$. In \cite{DE}, this result was extended: a sharp threshold length of the form $L^{th}_{1,c}(p) = e^{\frac {C(c)}p \ln^2 \frac 1p}$ was found for all $(1,c)$ models, where $C(c)$ is a constant depending only on $c$.  
The probability that a slab contains a critical droplet for the $(1,1)$ model is at least $1-(1-P_{1,1})^{N^2}$, and therefore the probability that $M$ slabs will get occupied, is at least $(1-(1-P_{1,1})^{N^2})^M$ (this all up to some irrelevant constants, due to the finite size of the droplets, possible overlapping of the droplets, etc). 

The logarithm of this probability is approximately \\
$-M e^{-P_{1,1} N^2} = -M e^{-(P_{1,1})^{-\varepsilon}} = -Me^{-e^{\frac {2C_H\varepsilon}{p}}}$, which is close to 0 for all $M = o(e^{e^{\frac {2C_H\varepsilon}{p}}})$. We see that it is possible to choose $M$ large enough so that $MN \geq (L^{th}_{1,c}(p))^2$ is satisfied.

Now we need to find a bound for $L$ such that a cube $[0,L]^3$ contains a critical droplet with high probability. This is the case if $L^3 \geq p^{-2N^2}$ (see \eqref{density+}).
We work out
\[
L \geq p^{-\frac 23 N^2} = e^{\frac 23 \ln \frac1p e^{\frac{2C_H(1+\varepsilon)}{p}}} = e^{e^{\frac{2C_H(1+\varepsilon)}{p} + o(\frac 1p)}},
\]
which gives us the desired lower bound. 

We remark that it is possible to refine this argument and obtain a smaller estimate for $N$. For instance, if we suppose that $M = N^k$ for some $k$, then we find that an $N$ by $N$ by 2 occupied rectangular block with $N^2 = \frac {k}{P_{1,1}} \ln \frac 1{P_{1,1}}$ is sufficiently large to act as a critical droplet. However, this refinement leads to the same conclusion.

\subsection{The lower bound in the general case}
\label{generallowsection}

We start with deriving Claim \ref{claim}, that is, we derive some basic estimates for $L^{-}_{a,b}(p)$ and $L^{+}_{a,b}(p)$.
Since the proof mostly consists of repeating work that has already been published, here we just sketch the steps that are needed. Namely, considering that

\begin{itemize}
	\item an occupied rectangle of size at least $C p^{-a} \times b$, with $C$ large enough, serves as a critical droplet,
as such an occupied rectangle 
grows with high probability into an occupied rectangle of size $C p^{-a}\ln \frac 1p \times p^{-b}$, in the same manner as the 2 by $C \frac 1p \ln \frac 1p$ critical droplet of \cite{EH} (the generalization is straightforward), 
	\item the probability of a fixed $2 \times C p^{-a}\ln \frac 1p$ rectangle to be occupied, is $p^{C p^{-a}\ln \frac 1p} = e^{-C p^{-a} \ln^2 \frac 1p}$,
	\item therefore, 
$L^{+}_{a,b}(p) \leq  e^{C p^{-a} \ln^2 \frac 1p}$,
\end{itemize}
and we obtained one of the inequalities.

For the bound in the other direction, again a generalization of the method of \cite{EH} for the $(1,2)$ model is needed. First, we generalize (2) of \cite{EH} to \eqref{anisospanned} (this paper, Section \ref{defsandlemmas}). Then we repeat the calculation that leads to (5) of \cite{EH}, replacing $p^2$ by $\tilde{p}^b$ and $p$ by $\hat{p}^a$, and choosing $x = C_2 \frac 1{p^a}\ln \frac 1p$ and $y = p^{-b+1/2}$ instead of choosing $k=\frac 1{p^{3/2}}$ and $l = C_2 \frac 1p \ln \frac 1p$. This will lead to the other inequality.

\medskip
Now we get back to the $(a,b,c)$ model in three dimensions. 
We will show that for all $L$ larger than the value stated in Theorem \ref{maintheorem}, the probability that the configuration is internally spanned, tends to 1. Suppose that there is an occupied rectangular block in the cube $[0,L]^3$, then for each direction, the size of its face will determine whether it will grow easily in that direction (also, the width of the rectangular block should be large enough; width $c$ is sufficient). For an occupied rectangular block to serve as a critical droplet, at least one of its faces should be of supercritical size for the reduced model in that direction. On the other hand, if all of the faces of an occupied rectangle are of supercritical size, then it is certainly a critical droplet.

We suppose for ease of notation that the easiest growth direction is the $z$ direction. Only if $a<b$ and Conjecture \ref{nextproject} does not hold, then it is possible that the $(a,c)$ model has a lower threshold length than the $(a,b)$ model. But in that case, these two models do have the same order of threshold length, therefore, in that case the $y$ and $z$ direction would behave essentially in the same manner. 

We denote 
\[
\bar{L}^+(p) = \max\{L^+_{a,b}(p),L^+_{a,c}(p),L^+_{b,c}(p)\}.
\] 

Let $P_{a,b}$ denote the density of critical droplets for the $(a,b)$ model (see \eqref{density+}).

We choose as a critical droplet for the $(a,b,c)$ model an occupied $N$ by $N$ by 2 rectangular block, where $N \geq (L_{a,b}^{+}(p))^{1+\varepsilon}$.
This rectangular block will grow easily into a $N$ by $N$ by $M$ rectangular block that is of supercritical size in all directions, that is, $MN \geq (\bar{L}^{+}(p))^2$.

The probability that $M$ slabs will get occupied, is at least $(1-(1-P_{a,b})^{N^2})^M$ (up to some irrelevant constants). 
We take the logarithm, which is approximately $-M e^{-P_{a,b} N^2} = -M e^{-P_{a,b}^{-\varepsilon}}$. This is close to 0 if $M = o(e^{(L_{a,b}^-(p))^{\epsilon}})$.
By the bounds in Claim \ref{claim}, we see that it is possible to satisfy both $M \geq \bar{L}^+(p)$ and $M = o(e^{(L_{a,b}^-(p))^{\epsilon}})$.

The probability that a fixed $N$ by $N$ by 2 rectangular block is occupied, is $p^{2N^2}$. Therefore, $p^{2N^2}$ describes the density of critical droplets in the $(a,b,c)$ model, and we obtain (see \eqref{density+})
\[
L^3 \geq (1/p)^{2N^2}.
\]

We work out
\begin{equation}
L \geq (1/p)^{\frac 23(L^+(p))^{2+2\varepsilon}} = e^{e^{Cf_{a,b}(1/p)(1+\varepsilon) + o(f_{a,b}(1/p))}},
\label{generalL}
\end{equation}

where $C$ is a suitable constant. This gives the lower bound for $L$ in Theorem \ref{maintheorem}. 

\section{The upper bound}

In this section, we show that for all $L$ smaller than the value $L_{a,b,c}^-(p)$ stated in Theorem \ref{maintheorem}, 
the probability that the configuration is internally spanned, tends to 0.
In the case that $L \leq (1/p)^{1/4}$, we simply estimate $\prob([0,L]^3 \mbox{ i.s.}) \leq p(p^{-1/4})^3 = p^{1/4}$, which tends to 0 if $p \to 0$. In the case that $L>(1/p)^{1/4}$ however, we need to do a lot more work.

We will apply the method introduced in \cite{cerf} for the case of threedimensional isotropic bootstrap percolation, later generalized to arbitrary dimension in \cite{cerf2}, and again presented in \cite{holroydmodified}, where some minor inaccuracies in the original presentation are corrected.
We modify the method to be applicable in the case of anisotropic bootstrap percolation. 

A necessary condition for a cube $[0,L]^3$ to be internally spanned in the $(a,b,c)$ model is that it is internally {\em crossed}, that is, in the final set of occupied sites, there is a path of occupied sites connecting two opposite faces of the cube, in every direction. To find a suitable bound for the probability of $[0,L]^3$ to be internally crossed, we will enhance the configuration, that is, occupy more sites. 
Throughout the remainder, we will suppose for ease of notation that the $z$ direction is the easiest growth direction. Again, even though if Conjecture \ref{nextproject} does not hold then the easiest growth direction might be the $y$ direction, this possible exception does not change our proof, since in that case the $y$ and $z$ direction behave essentially in the same manner.

\subsection{Some definitions and lemmas}
\label{defsandlemmas}

\begin{definition}
In {\em weakly enhancing} the initial configuration in a connected set of sites $\rho$, where $R = [0,r_x]\times[0,r_y]\times[0,r_z]$ is the smallest rectangular block covering $\rho$, we do the following:

\begin{enumerate}
\item we evolve according to the $(a,b,c)$ bootstrap rule, only taking into account occupied sites in $\rho$. 
\item We occupy sites that are {\em weakly spanned}. We find weakly spanned sites as follows: for every site that is occupied, but was initially empty, we draw lines from this site to every site in its neighborhood that is occupied in the final configuration. A weakly spanned site is an empty site that is crossed by (at least one) such a line. We give an example in Figure \ref{weaklyspannedfig}. 
\end{enumerate}

We say a rectangular block is {\em weakly crossed} if it is internally crossed in every direction after weakly enhancing.
\end{definition}

\begin{figure}%
\centering
\includegraphics[width=9cm]{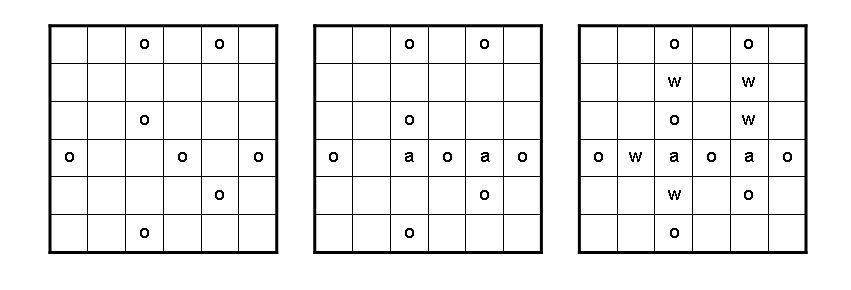}%
\caption{An example to explain weakly spanned sites. Consider the $(2,3)$ model, in a 6 by 6 rectangle. On the left is the initial configuration (occupied sites are marked o), in the middle the final configuration (spanned sites are marked s), and on the right we have added weakly spanned sites (marked w). We see that the rectangle is weakly crossed.}%
\label{weaklyspannedfig}%
\end{figure}

The following lemma is somewhat similar to  Lemma 1 from \cite{AL}. Related arguments continue to reappear in various forms in many papers on bootstrap percolation. 

\begin{lemma}

For the $(a,b,c)$ model, suppose that the cube $[0,l]^3$ is internally spanned. Then there are constants $\kappa$ and $\lambda$ that only depend on $a$, $b$ and $c$, such that for every $k \in [1,\frac{l-\lambda}{\kappa}]$, there is a rectangular block with longest side in $[k, \kappa k + \lambda]$ that is weakly crossed. 
\label{spannedsubsets}
\end{lemma}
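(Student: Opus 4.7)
The plan is to follow the template of the classical Aizenman--Lebowitz lemma (Lemma 1 of \cite{AL}), adapted to three dimensions, the anisotropic $(a,b,c)$ rule, and the weak-crossing relaxation. First I would construct a hierarchical merging process: run the bootstrap dynamics on $[0,l]^3$ in a fixed order while maintaining a collection of clusters, each represented by the smallest rectangular block covering it. The initial clusters are the occupied singletons. Whenever the dynamics activate a new site $x$, its occupied neighbors lie in one or more existing clusters; combine those clusters together with $\{x\}$ by pairwise merges into a single new cluster. Since $[0,l]^3$ is internally spanned, at the end of the process some cluster has bounding block of longest side at least $l$, while every initial cluster has longest side $1$.

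Second, I would prove by induction on the height of a cluster in the merging tree that its bounding block is weakly crossed. A leaf is a single site and is trivially weakly crossed. For an internal node that merges clusters with bounding blocks $R_1$ and $R_2$ via an activating site $x$: by the induction hypothesis each $R_i$ is weakly crossed, so contains weakly-enhanced paths joining opposite faces in every coordinate direction. The weak-enhancement step inside the merged cluster additionally draws line segments from $x$ to each of its occupied neighbors, and since these neighbors include sites in both $R_1$ and $R_2$ within distance $\max(a,b,c)$ of $x$, the segments furnish a weakly-enhanced connection between $R_1$ and $R_2$. Using the directional crossings of each $R_i$ as internal backbones, crossings of the merged block in all three directions can then be assembled by concatenation.

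Third, I would establish the size estimate. Because the occupied neighbors of $x$ all lie within distance $\max(a,b,c)$ of $x$ in each coordinate, the extents of the merged block $R$ obey
\[
w_i(R)\le w_i(R_1)+w_i(R_2)+d,\qquad d:=2\max(a,b,c),
\]
and in particular $s(R)\le s(R_1)+s(R_2)+d$. Setting $\kappa=2$ and $\lambda=d$, the conclusion follows from a standard doubling argument: for each target $k\in[1,(l-\lambda)/\kappa]$, walk through the sequence of merges and pick the first one at which some cluster attains longest side $\ge k$; the two parents of that cluster both have longest side strictly less than $k$, so the new cluster's longest side lies in $[k,\kappa k+\lambda]$, and by the previous step its bounding block is weakly crossed.

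I expect the main technical obstacle to lie in the concatenation of crossings in the inductive step, since the crossings supplied by the induction hypothesis in $R_1$ and $R_2$ do not in general terminate at neighbors of $x$. A natural resolution is to strengthen the inductive invariant so that, inside each cluster's bounding block, the weakly-enhanced set is path-connected; this allows a crossing of the merged block to enter $R_1$ on a face, reroute along the internal backbone to a neighbor of $x$, pass through $x$ into $R_2$, and exit on the opposite face. Verifying that this stronger invariant is preserved under merging, including degenerate merges where $R_1$ and $R_2$ overlap or where a single activation joins more than two clusters at once, will require a careful case analysis but does not demand ideas beyond those already present in the two-dimensional anisotropic literature.
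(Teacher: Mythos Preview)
Your overall strategy---an Aizenman--Lebowitz merging tree, with the bounding block of each cluster shown to be weakly crossed by induction---is the same as the paper's. The difference is that you insist on \emph{pairwise} merges so as to obtain $\kappa=2$, whereas the paper merges, in one step, the full set $\mathcal S$ of ``influencing regions'' of the newly activated site $x$ (of cardinality up to $a+b+c$), obtaining $\kappa=a+b+c$ and $\lambda=(a+b+c)(2c+1)$.

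The pairwise decomposition introduces a genuine gap. Suppose the activation of $x$ requires occupied neighbours in three distinct clusters $A_1,A_2,A_3$. Your first merge produces the set $\rho=A_1\cup\{x\}\cup A_2$, and you assert that ``the weak-enhancement step inside the merged cluster additionally draws line segments from $x$ to each of its occupied neighbours''. But weak enhancement draws lines only from \emph{spanned} sites, i.e.\ sites that are initially empty and become occupied when the bootstrap rule is run on $\rho$. Running the rule on $\rho$ alone, $x$ need not become occupied (it needed $A_3$), so no lines emanate from $x$; and since in the $(a,b,c)$ neighbourhood a neighbour of $x$ can sit at lattice distance up to $c>1$ from $x$, the set $A_1\cup\{x\}\cup A_2$ is in general disconnected even after weak enhancement. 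Hence the bounding block of this intermediate cluster is not weakly crossed, and your induction breaks at exactly this node. Your proposed strengthening (path-connectedness of the weakly-enhanced set) fails for the same reason: the invariant is simply not preserved by the intermediate merge.

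The obstacle you \emph{do} anticipate---routing crossings through $x$ when the $R_i$-crossings do not terminate near $x$---is not the real problem; once all influencing regions are merged at once, $x$ \emph{is} spanned inside the union, the weak-enhancement segments from $x$ to each region supply the link, and each region is already internally connected after weak enhancement, so concatenation is immediate. The fix is therefore to abandon the pairwise decomposition: merge the entire influencing set $\mathcal S$ of $x$ in one step (and, as the paper does, also merge any two existing regions whose union is already a region, to absorb nearby clusters). The size estimate then reads $D_{i+1}\le (a+b+c)(D_i+2c+1)$, giving $\kappa=a+b+c$ rather than $2$; since the lemma only asks for \emph{some} constants $\kappa,\lambda$ depending on $a,b,c$, nothing is lost.
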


\begin{proof}
First, we need some further notation. We say a {\em region} is a set of occupied sites such that the set is connected after weakly enhancing it. Therefore, if $\rho$ is a region then the smallest rectangular block that covers it, is weakly crossed. 
We say that an empty site $x$ has a {\em set of influencing regions} if there is a set of regions such that if we empty the entire configuration except these regions then $x$ is spanned, but if we furthermore empty any one of these regions then $x$ is not spanned. There may be several possibilities for the set of influencing regions. The cardinality of these sets is always between 1 and $a+b+c$, and the distance between any two regions in one of these sets is at most $2c+1$. We say that the influencing rectangular block is the smallest rectangular block such that it covers the set of influencing regions.

Next, we present a special order of occupying sites according to the bootstrap rule. After step $i$, we will have a set of regions $\mathcal{C}_i$. We start with the set $\mathcal{C}_1$ of regions with diameter 1, that consists of all single occupied sites in the initial configuration. Since we suppose that $[0,l]^3$ is internally spanned, $\mathcal{C}_1$ is not empty. Then we iterate in the following manner:

\begin{itemize}
\item If there are two regions such that their union is again a region, then $\mathcal{C}_{i+1}$ consists of all the regions in $\mathcal{C}_i$ with these two regions removed, and with a new region added that is the union of these two regions. If there are several such pairs of regions, then we make an arbitrary choice.
\item If not, then we choose an arbitrary site $x$ that has a set of influencing regions. If it has several such sets, then we make an arbitrary choice. Call this set $\mathcal{S}_i$. 
\item We suppose the entire configuration is empty except the regions in $\mathcal{S}_i$.
\item We occupy $x$. Either we now have a single occupied connected component, or we would have one after weakly enhancing. Therefore, the union of all regions in $\mathcal{S}_i$, plus $x$, forms a region.
\item $\mathcal{C}_{i+1}$ consists of all the regions in $\mathcal{C}_i$ with the regions in $S_i$ removed, and with a new region added that consists of the union of the regions in $S_i$, and $x$.  
\end{itemize}

Now the following holds: If the maximal diameter of all regions in $\mathcal{C}_i$ is $D_i$, then the maximal diameter of all regions in $\mathcal{C}_{i+1}$ is in $[D_i, (a+b+c)(D_i+2c+1)]$, and since we supposed that $[0,l]^3$ is internally spanned, the iteration does not end until the maximal diameter is $l$. Therefore, if we choose $\kappa = a+b+c$, and $\lambda = (2c+1)(a+b+c)$, the proof is complete. 
\end{proof}

\begin{definition}
In {\em enhancing} the initial configuration in a cube $[0,l]^3$, we do the following:

\begin{enumerate} 
\item We divide the cube in $l/s$ disjoint {\em slices} $[0,l]^2[i,i+s]$, with $i = 0\ldots l/s-1$. We suppose for ease of notation that $l$ is a multiple of $s$. We choose $s$ as the minimal value such that if an empty site $x$ is in slice $i$, and all other slices are fully occupied, then site $x$ still needs $a+b$ occupied sites in the intersection of its neighborhood with slice $i$, to become occupied. For instance, in \cite{cerf, cerf2} the slices have width 2, whereas for the modified model in \cite{holroydmodified} the slices can have width 1. For the $(a,b,c)$ model, we need that $a+b+c - (2c+1-s) = a+b$, this implies that we have to choose $s = c+1$. 
We say that a set of sites $(x,y,i),(x,y,i+1),\ldots,(x,y,i+s)$ is a {\em minicolumn} in slice $i$. 
For every minicolumn in every slice, if at least one site in the minicolumn is occupied, then we occupy all sites in the minicolumn. We say that $q$ is the probability for a minicolumn to be occupied, and note that $p \leq q \leq sp$, since in the initial configuration, minicolumns are independent.
\item We evolve the minicolumn configuration of every slice by the twodimensional bootstrap rule, that is, we view minicolumns as sites in the $(a,b)$ model with $q$ as parameter, and iteratively occupy those that have at least half of their neighborhood occupied. By our choice of $s$, we are not occupying sites that would not get occupied in the threedimensional model. In fact, notice that by now we have occupied at least every site that would also get occupied with the threedimensional bootstrap rule, and also a subset of the weakly spanned sites, namely the ones that are crossed by a line in the $z$-direction.
\item We fully occupy every slice where there is an occupied component of at least some size $S$. We call such a slice {\em flooded}. We will choose $S$ such that it resembles a critical droplet for the $(a,b)$ model with $q$ as parameter.
\item We occupy weakly spanned minicolumns. That is, we say a minicolumn is weakly spanned if it contains a weakly spanned site.
This step is not essential for arriving at \eqref{ecrossingbound}, but by including this step we ensure that an enhanced configuration dominates a weakly enhanced configuration, which is necessary to arrive at \eqref{finalbound}.
\item We fully occupy slice 0 and slice $L/s-1$.
\end{enumerate}

We say a rectangular block is {\em e-crossed} if it is internally crossed in the $z$-direction after enhancing. We denote $\prob_l(\mbox{e-crossed})$ for the probability that the cube $[0,l]^3$ is e-crossed. 
\label{enhanceddef}
\end{definition}

The following lemma bears resemblance to the derivation of (3.30) of \cite{cerf}, and of (23) of \cite{holroydmodified}, but we derive the bound for the $(a,b)$ model with arbitrary $a$ and $b$. In the case $(a,b) = (1,2)$, one may choose $x_c < \frac 1p \ln{\frac 1p}$ and $y_c < p^{-3/2}$ as in \cite{EH}, but note that this choice for $y_c$ is not essential: the lemma is valid for any $y_c < p^{-2+\epsilon}$. 

\begin{lemma}
For the $(a,b)$ model and $p$ small enough, the expected size $\chi$ of the occupied component of the origin, given that $x < x_c$ and $y < y_c$, with $x_c < y_c < p^{-b+\epsilon}$ for a fixed $\epsilon>0$, is at most $\sqrt{p}$. This bound does not change when weakly spanned sites are occupied.
\label{steppingstonebound}
\end{lemma}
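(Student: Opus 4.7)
The plan is to mimic the derivation of inequalities (2)--(5) in \cite{EH} for the $(1,2)$ model, using the anisotropic generalization \eqref{anisospanned} as the key new ingredient. Let $C_0$ denote the occupied component of the origin after applying the $(a,b)$ bootstrap rule (optionally also occupying weakly spanned sites). Writing $\chi=\sum_v \prob(v\in C_0,\,\mathrm{extent}(C_0)\le x_c\times y_c)$ and letting $R$ be the minimal rectangle enclosing $C_0$, I would first bound
\begin{equation*}
\chi \;\le\; \sum_{1\le x\le x_c}\sum_{1\le y\le y_c} x^{2} y^{2}\,\prob\bigl([0,x]\times[0,y] \text{ is internally spanned}\bigr),
\end{equation*}
where one factor of $xy$ counts sites of $C_0$ inside $R$ and the other counts translates of $R$ containing the origin. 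Occupying weakly spanned sites only enlarges $R$ by at most a constant factor per growth step, so the same bound applies up to a harmless polynomial multiplier absorbed into the prefactor.

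Next, I would apply \eqref{anisospanned}, the $(a,b)$-analogue of inequality~(2) of \cite{EH}. This estimate encodes the fact that, in order for $[0,x]\times[0,y]$ to be internally spanned, growth must propagate across roughly $x/(2a{+}1)$ ``hard'' columns (each requiring a $b$-cluster in the $y$ direction, costing $\tilde p^b$) and roughly $y/(2b{+}1)$ ``easy'' rows (each requiring an $a$-cluster in the $x$ direction, costing $\hat p^a$), where $\tilde p$ and $\hat p$ are the usual slightly inflated variants of $p$. The resulting bound has the schematic shape
\begin{equation*}
\prob\bigl([0,x]\times[0,y] \text{ internally spanned}\bigr) \;\le\; \mathrm{poly}(x,y)\cdot(\tilde p^{\,b})^{\Omega(x)}\cdot(\hat p^{\,a})^{\Omega(y)}.
\end{equation*}

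Then I would reproduce the calculation leading to (5) of \cite{EH}, making the substitutions $p^2\mapsto \tilde p^{\,b}$ and $p\mapsto \hat p^{\,a}$ and, as suggested in the sketch for Claim~\ref{claim}, choosing cutoffs of the type $x_c=C_2 p^{-a}\ln(1/p)$ and $y_c<p^{-b+\varepsilon}$. Plugged into the double sum, the factor $(\tilde p^{\,b})^{\Omega(x_c)}$ produces a superpolynomially small $p^{\Omega(p^{-a}\ln(1/p))}$ that dominates all combinatorial prefactors arising from $x\le x_c$, while the factor $(\hat p^{\,a})^{\Omega(y)}$ at $y=y_c$ provides an additional $p^{\Omega(p^{\varepsilon-b}\cdot a)}$ gain, crucially exploiting $\varepsilon>0$ so that the exponent grows with $1/p$. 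The geometric series in $x$ and in $y$ converge, and the total is easily seen to be $o(\sqrt p)$ for $p$ small.

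The main obstacle is establishing \eqref{anisospanned} itself with the right constants: the classical Aizenman--Lebowitz / van den Berg--Brightwell--Balogh argument must be adapted to the asymmetric threshold rules, tracking carefully how the $a+b$ occupied neighbors can be partitioned between the two directions and how, consequently, a spanned rectangle decomposes into independent ``stepping-stone'' clusters of the appropriate anisotropic shape. Once this input is in place, the rest of the proof is a routine summation mirroring \cite{EH}, and the robustness under weak enhancement follows from the fact that weakly spanned sites merely add sites lying on segments already accounted for in the bounding rectangle $R$.
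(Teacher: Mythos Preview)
Your outline does not match what \eqref{anisospanned} actually delivers, and the summation you sketch does not converge to something below $\sqrt{p}$. The inequality \eqref{anisospanned} is a \emph{minimum} of two one-sided bounds, not a product; there is no estimate of the shape $(\tilde p^{\,b})^{\Omega(x)}\cdot(\hat p^{\,a})^{\Omega(y)}$ available here, and indeed the factor $(1-(1-\hat p^a)^x)^y$ becomes useless once $x\hat p^a\gtrsim 1$, which happens already for $x$ of order $p^{-a}$. Moreover, your remark that the $x=x_c$ term ``dominates all combinatorial prefactors'' is backwards: in the sum $\sum_{x,y} x^2y^2\,\prob(\text{spanned})$ the small-$x$ terms dominate, and using only the usable half of \eqref{anisospanned} gives $\prob\le (Cy_c\tilde p^{\,b})^x\le (Cp^{\epsilon})^x$, after which the $x$-sum is $O(p^{\epsilon})$ but the unconstrained $y$-sum contributes $\sum_{y\le y_c}y^2\asymp y_c^3\asymp p^{-3b+3\epsilon}$, and the product blows up.

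The paper's proof avoids this by a device you are missing: it splits at a \emph{fixed constant} scale $r$ (depending only on $b$ and $\epsilon$). For diameter at most $r$ one uses nothing but the fact that the component must contain an initially occupied site, giving a contribution $\le r^4 p$. For diameter between $r$ and $y_c$ one uses only the first branch of \eqref{anisospanned} (in its weakly-spanned variant \eqref{anisoweaklyspanned}, which inserts the harmless factors $2a{+}1$, $2b{+}1$) and the hypothesis $y_c<p^{-b+\epsilon}$, obtaining $\prob\le (2(2a{+}1)\tilde p^{\,\epsilon})^r$; multiplying by the crude size bound $x_cy_c\le p^{-2b+2\epsilon}$ and choosing $r$ so that $-2b+(2+r)\epsilon>1/2$ makes this term $o(\sqrt p)$ as well. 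No geometric double sum, no product bound, and no specific choice such as $x_c=C_2 p^{-a}\ln(1/p)$ are needed; the argument works for \emph{any} $x_c<y_c<p^{-b+\epsilon}$, which is what the lemma asserts. Finally, the robustness under weak enhancement is handled not by ``enlarging $R$ by a constant factor'' but by modifying the spanning bound itself to \eqref{anisoweaklyspanned}.
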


\begin{proof}
We write 
\[
\chi \leq \sum_{x=1}^{x_c} \sum_{y=1}^{y_c} xy\prob(\mbox{ diameter} = \max\{x,y\}).
\]
For arbitrary $r$, we have 
\[ 
\chi \leq r^2 \prob(0 \leq\mbox{ diameter} \leq r) + x_c y_c \prob(r \leq \mbox{ diameter} < \max\{x_c,y_c\}).
\]
We have that if the diameter of the occupied component of the origin is at most $r$, then there must be at least one occupied site among at most $r^2$ sites. therefore, $\prob( 0 \leq \mbox{diameter} \leq r) \leq r^2 p$. If the diameter is in $[r,\max\{x_c,y_c\}]$, then we estimate $\prob(r \leq \mbox{ diameter} < \max\{x_c,y_c\}) \leq \prob(\mbox{ diameter} < \max\{x_c,y_c\})$. To estimate this probability, we generalize (2) of \cite{EH}. We first quote this equation, which was used for the case $(a,b) = (1,2)$:

\begin{equation}
\prob(\mbox{a fixed } x,y \mbox{ rectangle is internally spanned}) \leq \min\{(1-(1-\tilde{p}^2)^y)^x,(1-(1-\hat{p})^x)^y\},
\end{equation}

where $\tilde{p} \geq p$ and $\hat{p} \geq p$ are of the order $p$. In words, the meaning of this expression is that a necessary condition for a rectangle to be internally spanned, is that there is no row or column in which not even one site is spanned, even if the neighboring row or column is fully occupied. In the $y$ direction, one needs two occupied sites, close enough to each other, for a third one to be spanned. The probability for this to occur at a fixed position, is of order $p^2$. In the $x$ direction however, only one occupied site suffices.
Therefore, the generalization to arbitrary $a,b$ is straightforward:

\begin{equation}
\prob(\mbox{a fixed } x,y \mbox{ rectangle is internally spanned}) \leq \min\{(1-(1-\tilde{p}^b)^y)^x,(1-(1-\hat{p}^a)^x)^y\}.
\label{anisospanned}
\end{equation}

However, we need a bound that is also valid when weakly spanned sites are occupied as well. Note that for a site to be weakly spanned, there has to be a spanned neighbor in a nearby row/column, that is, at most $a$ resp. $b$ columns resp. rows away. Therefore, we adapt the bound as follows:

\begin{eqnarray}
\nonumber \lefteqn{\prob(\mbox{a fixed } x,y \mbox{ rectangle is internally spanned}) \leq} \\ 
\nonumber & &\min\{(1-(1-\tilde{p}^b)^{(2a+1)y})^x,(1-(1-\hat{p}^a)^{(2b+1)x})^y\}.\\
\label{anisoweaklyspanned}
\end{eqnarray}

We estimate
\begin{equation}
\begin{split}
\min\{(1-(1-\tilde{p}^b)^{(2a+1)y})^x,(1-(1-\hat{p}^a)^{(2b+1)x})^y\} & \leq (1-(1-\tilde{p}^b)^{(2a+1)y})^x \\
 & \leq (2(2a+1)y\tilde{p}^b)^x \\
 & \leq (2(2a+1)y_c\tilde{p}^b)^r.\\
 \end{split}
\end{equation}

Now we insert that $y_c < p^{-b+\epsilon}$, so that
\[
\prob(\mbox{ diameter} < \max\{x_c,y_c\}) \leq (2(2a+1)\tilde{p}^{\epsilon})^r.
\]

Therefore, we find
\[
\chi \leq r^4 p + x_c y_c(2(2a+1)\tilde{p}^\epsilon)^r \leq r^4 p + p^{-2b+2\epsilon}(2(2a+1)\tilde{p}^\epsilon)^r.
\]
We choose $r$ such that $-2b + (2+r) \epsilon > 1/2$. With this choice of $r$ and $p$ small enough, $\chi$ is less than $\sqrt{p}$, as needed.
\end{proof}

\begin{lemma}
Let $\bar{P}_{a,b} = (L^{-}_{a,b}(p))^{-1}$. 
Let $\prob_l(\mbox{e-crossed})$ and $s$ be defined as in Definition \ref{enhanceddef}.
Then
\[
\prob_l(\mbox{e-crossed}) \leq \begin{cases}
														2l^2(2\sqrt{sp})^{l/s - 2} & \mbox{ if } (1/p)^{1/4} < l < p^{-b+\epsilon},\\
														4l^2 \left(l^3 (sp)^{\frac {s-1}2} p^{2b} \bar{P}_{a,b})\right)^{l/s} & \mbox{ if } l \geq p^{-b+\epsilon}.\\
														\end{cases}
\label{enhancedlemma}
\]
\end{lemma}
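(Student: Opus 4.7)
My plan is to adapt the slicing strategy of \cite{cerf, cerf2, holroydmodified} to the anisotropic setting. Since slices $0$ and $l/s-1$ are fully occupied by step 5 of the enhancement, being e-crossed in $z$ means that the enhanced configuration contains a connected path traversing each of the $l/s-2$ intermediate slices. Fixing the entry point of this path on the boundary of slice $1$ contributes the leading $l^2$ (or $4l^2$) union-bound factor. Because the initial configurations in disjoint slices are independent and because the enhancement rules in steps 1--4 act separately within each slice, the probability of e-crossing factorises (up to that overhead) as a product of per-slice crossing probabilities.

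For the small regime $l<p^{-b+\epsilon}$, no slice can be flooded: flooding requires an occupied minicolumn cluster of size $S$ comparable to a critical droplet for the $(a,b)$ model at parameter $q\le sp$, and by Claim \ref{claim} such a cluster cannot fit inside an $l\times l$ cross-section. Hence the per-slice event reduces to the existence of a connected $2$D component of occupied minicolumns linking the path's entry and exit. By Lemma \ref{steppingstonebound} applied with parameter $q\le sp$ (the hypothesis $l<p^{-b+\epsilon}$ supplying the required diameter bound), the expected size of such a component is at most $\sqrt{sp}$. A Markov-type estimate, with the factor of $2$ absorbing the sum over the few candidate exit positions, then gives per-slice probability $2\sqrt{sp}$, and multiplying over the $l/s-2$ intermediate slices yields the first bound.

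For the large regime $l\ge p^{-b+\epsilon}$ a slice may be flooded and the per-slice analysis splits into two disjoint contributions. Flooding is controlled by counting droplet locations (of order $l^3$ once one accounts for a thin starting strip) times the critical-droplet density $\bar P_{a,b}=(L_{a,b}^-(p))^{-1}$, while the non-flooded alternative is treated by the small-regime argument, now invoking \eqref{anisoweaklyspanned} to keep the diameter restriction alive. The factor $(sp)^{(s-1)/2}$ comes from requiring roughly $(s-1)/2$ occupied minicolumns in the thin starting strip, and $p^{2b}$ accounts for the two sites that must be initially occupied to trigger growth in the harder $b$-direction (as in the $(1,2)$ reasoning of \cite{EH}). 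Combining both contributions and paying the $4l^2$ boundary union bound produces the second estimate.

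The main obstacle is the bookkeeping for the large-$l$ case: one has to ensure that after separating off the flooding alternative the diameter hypothesis of Lemma \ref{steppingstonebound} genuinely applies to what remains, and one has to verify that the weakly-spanned minicolumns added in step 4 of the enhancement neither break the slice-wise independence nor inflate the combinatorial counts beyond the stated bounds. These are precisely the subtleties that were corrected in \cite{holroydmodified} relative to the original argument in \cite{cerf}, and I would follow their presentation line by line, substituting our anisotropic input \eqref{anisoweaklyspanned} and Lemma \ref{steppingstonebound} at the points where the isotropic case uses its $(1,1)$ analogues.
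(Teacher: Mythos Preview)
Your small-$l$ argument is close to what the paper does: with no flooded slices one needs an occupied ``stepping stone'' in every intermediate slice, and Lemma~\ref{steppingstonebound} bounds the expected stepping-stone size by $\sqrt{q}\le\sqrt{sp}$, giving the factor $(2\sqrt{sp})^{l/s-2}$. One caveat: the crossing path is \emph{not} forced to advance monotonically through the slices; it may zig-zag back and forth, visiting several disjoint components in one slice. The paper handles this by summing over the total number $H=l_j+2k$ of components visited and bounding the number of orderings by $2^{H}$, which is where the factor $2$ in $2\sqrt{q}$ actually comes from. Your ``Markov-type estimate, with the factor of $2$ absorbing the sum over the few candidate exit positions'' is not the right mechanism.

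The large-$l$ case, however, is not merely a per-slice dichotomy ``flooded versus not flooded'', and your identification of the factors is incorrect. In the paper one first \emph{conditions on the set $\{i_1,\dots,i_m\}$ of flooded slices}; in each gap of length $l_j$ between consecutive flooded slices one then applies the stepping-stone bound $2l^2(2\sqrt{q})^{l_j}$. Multiplying over gaps gives $q^{(l-m)/2}$ (since $\sum_j l_j=l-m$), while the flooding event contributes $\bigl((d_Sl)^2 p^{-2b\epsilon}\bar P_{a,b}\bigr)^m$, and summing over positions adds a factor $l^m$. The final bound is obtained by summing the resulting expression over $m$; the exponent $(s-1)/2$ on $sp$ is purely an \emph{algebraic} artefact of combining $q^{l/2}$ with $(q^{-1/2})^{l/s}$, and the power $p^{2b}$ arises from the diameter $d_S$ of the flooding droplet, not from ``two sites initially occupied to trigger growth in the $b$-direction''. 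There is no ``thin starting strip'' and no per-slice cost of $(sp)^{(s-1)/2}p^{2b}$ as you describe. As written, your proposal would not reproduce the stated bound, because the essential step---summing over the number and positions of flooded slices and combining with the gap-crossing factors---is missing.
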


\begin{proof}
Since we consider the cube $[0,l]^3$ to be e-crossed, we denote according probabilities with a subscript $l$.

We condition on the number of flooded slices:
\[
\prob_l(\mbox{e-crossed}) = \sum_{m=0}^{l/s-2} \prob_l(\mbox{e-crossed}|m \mbox{ slices flooded})\prob(m \mbox{ slices flooded}).
\]
Note that the first and last slice are fully occupied, but they are not flooded slices. By independence of the slices, we have
\[
\prob_l(m \mbox{ slices flooded}) = (\prob_l(\mbox{a slice is flooded}))^m,
\]
where the probability for a slice to be flooded is the probability that there is an occupied component of at least size $S$ for the $(a,b)$ model in the slice. 
We now condition on the positions of the flooded slices:
\[
\prob_l(\mbox{e-crossed}|m \mbox{ slices flooded}) = \sum_{i_1,\ldots,i_m}\prob_l(\mbox{e-crossed}|\mbox{slices }i_1,\ldots,i_m \mbox{ flooded}).
\]
By $\mathcal{E}_{i,j}$, for $j>i+1$, we denote the event that there exists a crossing from slice $i+1$ to slice $j-1$. Then
\[
\prob_l(\mbox{e-crossed}|\mbox{slices }i_1,\ldots,i_m \mbox{ flooded}) = \prod_{j=0}^m \prob_l(\mathcal{E}_{i_j,i_{j+1}}).
\]
The slices $i_j+1,\ldots,i_{j+1}-1$ are not flooded, therefore the largest occupied connected component is strictly smaller than $S$. If there is a crossing, then there is a self-avoiding path from slice $i_j$ to slice $i_{j+1}$ that visits a connected component in each slice in between, and possibly in some slices more than once through several disjoint connected components, like skipping between slices from one stepping stone to another. We choose the path such that the total number $H$ of connected components visited is minimal.
Let $l_j = i_{j+1} - i_j - 2$. We sum over possible values of $H$:
\[
\prob_l(\mathcal{E}_{i_j,i_{j+1}}) \leq \sum_{h \geq l_j} \prob_l(\mathcal{E}_{i_j,i_{j+1}} \mbox{in $h$ steps}).
\]
Note that $H$ can take only values $h_k = l_j+2k$, with $k = 0,1,\ldots$. Since for each larger value of $k$ the path contains an extra back- and forwardstep, for each $k$ the number of possibilities for the order of slices visited is $(l_j)^k$. We bound this by $2^{h_k}$, which is the number we get if we suppose that for every skip there are two possibilities, disregarding the fact that the path needs to lead to slice $i_{j+1}$.
We need the probability of $H$ connected components: after each stepping stone, we need a new stepping stone at the right position for the path to continue. That is, if a stepping stone is contained in a rectangle of $x$ times $y$, then there are at most $xy$ choices for the site from where we skip to the next slice. Since slice $i_j$ is fully occupied, there are $l^2$ choices for the site from which we make the first skip.
We also use that a stepping stone is smaller than $x_c y_c$, the dimensions of $S$.
We obtain
\[
\prob_l(\mathcal{E}_{i_j,i_{j+1}}) \leq l^2 \sum_{k\geq1} 2^{h_k} \prod_{i=1}^{h_k} \sum_{x_i< x_c,y_i<y_c} x_i y_i \prob_l(\mbox{occupied component has size }x_i y_i).
\]
The factor $\sum_{x_i< x_c,y_i<y_c} x_i y_i \prob_l(\mbox{occupied component has size }x_i y_i)$ does not depend on $i$, and is the expected size $\chi$ of the connected occupied component of a fixed site, given that it is smaller than $x_c y_c$.
We will choose $S$ such that we can bound this expected size by $\sqrt{q}$. By Lemma \ref{steppingstonebound}, we can choose $x_c < y_c < p^{-b+\epsilon}$ for some fixed $\epsilon>0$, so that $d_S = p^{-b+\epsilon}$, where $d_S$ is the diameter of $S$. We work out
\[
\sum_{k\geq 1} 2^{h_k} \prod_{i=1}^{h_k} \chi \leq \sum_{k\geq 1} (2\sqrt{q})^{h_k} \leq 2(2\sqrt{q})^{l_j},
\]
where the last bound is valid for $q$ small enough so that $2\sqrt{q} \leq 1/2$. 
Therefore,
\begin{equation}
\prob_l(\mathcal{E}_{i_j,i_{j+1}}) \leq 2l^2(2\sqrt{q})^{l_j}.
\label{skipping}
\end{equation}
If $S$ is a critical droplet, then we can estimate $\prob_l(\mbox{a slice is flooded}) \leq (d_S l)^2 \bar{P}_{a,b}$ if $l \geq d_S$, and 0 otherwise, since $\bar{P}_{a,b}$ is an upper bound for the probability of any choice of critical droplet (see \eqref{density-}).
However, to be sure that $S$ is a critical droplet we need $d_S \geq p^{-b}$, but we chose $d_S < p^{-b+\epsilon}$.
The probability that $S$ is occupied is therefore at most a factor $p^{-2b\epsilon}$ larger than $\bar{P}_{a,b}$, so we estimate
\[
\prob_l(\mbox{a slice is flooded}) \leq (d_S l)^2 p^{-2b\epsilon} \bar{P}_{a,b}.
\]
Putting everything together, we get that if $l \geq p^{-b+\epsilon}$, then 
\[
\prob_l(\mbox{e-crossed}) \leq \sum_{m=0}^{l/s-2} ((d_S l)^2 p^{-2b\epsilon}\bar{P}_{a,b})^m \sum_{i_1,\ldots,i_m} 4l^2 \prod_{j=0}^m q^{l_j/2}.
\]
We use that the number of possibilities for $i_1,\ldots,i_m$ is at most $l^m$, and that $l_0+l \dots+l_m = l-m$, so that $\prod_{j=0}^m q^{l_j/2} = q^{1/2(l-m)}$. Then
\[
\prob_l(\mbox{e-crossed}) \leq 4l^2 q^{l/2} \sum_{m=0}^{l/s-2} \left(\frac l{\sqrt{q}} (d_S l)^2 p^{-2b\epsilon}\bar{P}_{a,b}\right)^m,
\]
so that 
\[
\prob_l(\mbox{e-crossed}) \leq 4l^2 q^{l/2} \left(\frac l{\sqrt{q}} (d_S l)^2 p^{-2b\epsilon}\bar{P}_{a,b})\right)^{l/s}. 
\]
We insert $d_S \leq p^b$ and $q \leq sp$ and simplify, so that we finally get, for $l \geq p^{-b+\epsilon}$,
\begin{equation}
\prob_l(\mbox{e-crossed}) \leq 4l^2 \left(l^3 (sp)^{\frac {s-1}2} p^{2b(1-\epsilon)} \bar{P}_{a,b})\right)^{l/s}. 
\label{ecrossingbound}
\end{equation}
 
In the case that $l < p^{-b+\epsilon}$, we have that $\prob_l(\mbox{a slice is flooded}) = 0$. Therefore we use \eqref{skipping} with $i_j = 1$ and $i_{j+1} = l/s -1$, and obtain
\begin{equation}
\prob_l(\mbox{e-crossed}) \leq 2l^2(2\sqrt{sp})^{l/s - 2},
\label{smallecrossingbound}
\end{equation}
which tends to 0 for all $(1/p)^{1/4} < l < p^{-b+\epsilon}$.
 
\end{proof}
 
\subsection{Proof for the upper bound} 

{\it Proof of Theorem \ref{maintheorem}, upper bound for $L$}

We use Lemma \ref{spannedsubsets}
to estimate the probability of the cube $[0,L]^3$ to be internally spanned as follows:
\begin{equation}
\prob([0,L]^3 \mbox{ i.s.}) \leq L^3 \min_{1\leq k \leq \frac {L-\lambda}{\kappa}} ((\kappa-1)k+\lambda) \max_{k\leq l \leq \kappa k+\lambda} \prob_l(\mbox{e-crossed}).
\label{finalbound} 
\end{equation}

To arrive at this expression, we have reasoned as follows: If a rectangular block is weakly crossed, then the smallest cube covering this rectangular block has a crossing in some direction, after weakly enhancing. The probability for this is bounded by the probability that there is a crossing in the easiest direction after weakly enhancing. We will insert our upper bound for $\prob_l(\mbox{e-crossed})$ from Lemma \ref{enhancedlemma}. This bound was derived for the enhanced configuration, which dominates the weakly enhanced configuration. 

We consider three ranges for $L$. 

If $L < p^{-b+\epsilon}$, then we estimate
\[
\prob([0,L]^3 \mbox{ i.s.}) \leq \prob_L(\mbox{e-crossed}),
\]
which tends to 0 by \eqref{smallecrossingbound}.

Next, if $L \geq p^{-b+\epsilon}$ then we combine equations \eqref{finalbound} and \eqref{ecrossingbound}, and get
\begin{equation}
\prob([0,L]^3 \mbox{ i.s.}) \leq L^3 \min_{1\leq k \leq \frac {L-\lambda}{\kappa}} ((\kappa-1)k+\lambda) \max_{k\leq l \leq \kappa k+\lambda} 4l^2 \left(l^3 (sp)^{\frac {s-1}2} p^{2b(1-\epsilon)} \bar{P}_{a,b})\right)^{l/s}.
\label{hugemess}
\end{equation}

We will use that for $p$ small enough, $(sp)^{\frac {s-1}2} p^{2b(1-\epsilon)} <1$. 

In the case that $p^{-b+\epsilon} \leq L < (L^{-}_{a,b}(p))^{1/4}$, we choose $k = \frac {L-\lambda}\kappa$, then $(\kappa-1)k+\lambda \leq L$. 
This gives
\[
\prob([0,L]^3 \mbox{ i.s.}) \leq L^4 \max_{\frac {L-\lambda}\kappa \leq l \leq L} 4l^2 \left(l^3 \bar{P}_{a,b})\right)^{l/s}.
\]
We estimate further
\begin{eqnarray*}
\prob([0,L]^3 \mbox{ i.s.}) & \leq & 4\max_{\frac {L-\lambda}\kappa \leq l \leq L} L^6 \left(L^3 \bar{P}_{a,b})\right)^{l/s}\\
											& \leq & 4\max_{\frac {L-\lambda}\kappa \leq l \leq L} \left(L^9 \bar{P}_{a,b})\right)^{l/s}.
\end{eqnarray*}
Recall that $(L^{-}_{a,b}(p))^{1/3} = \bar{P}_{a,b})$.
If $L < (L^{-}_{a,b}(p))^{1/4}$, then $(L^9 \bar{P}_{a,b})$ tends to 0 as $p \to 0$, so that this entire expression tends to 0. 

Finally, we consider the case $L \geq (L^{-}_{a,b}(p))^{1/4}$. We now choose $k = \frac{(L^{-}_{a,b}(p))^{1/4}-\lambda}{\kappa}$ in \eqref{hugemess}, so that $\frac{(L^{-}_{a,b}(p))^{1/4}-\lambda}{\kappa} \leq l \leq (L^{-}_{a,b}(p))^{1/4}$; this means that $l = O((L^{-}_{a,b}(p))^{1/4})$. We also have $(\kappa-1)k+\lambda \leq (L^{-}_{a,b})^{1/4}$, so that:
\begin{eqnarray*}
\prob([0,L]^3 \mbox{ i.s.}) & \leq L^3  \max_l (4L^{-}_{a,b}(p))^{3/4}\left(l^3 \bar{P}_{a,b}\right)^{l/s}\\
											& \leq 4 L^3  \max_l \left((L^{-}_{a,b}(p))^{-3/2}\right)^{l/s}.\\
\end{eqnarray*}

We write $L^{-}_{a,b}(p) = e^{f(1/p)}$. 
Since $l = O((L^{-}_{a,b}(p))^{1/4})$, there is a constant $C$ such that the right factor of this expression tends to 0 as $e^{-e^{C f_{a,b}(1/p)}}$. Therefore, if $L^3$ tends to $\infty$ more slowly, then the entire expression tends to 0. This gives the required upper bound for $L$.

\qed

\paragraph{Acknowledgement:} We thank Alexander Holroyd for inspiring discussions and Hugo Duminil-Copin for a helpful correspondence.

\end{document}